\def\beq{\begin{equation}}
\def\eeq{\end{equation}}
\def\beqa{\begin{eqnarray}}
\def\eeqa{\end{eqnarray}}
\def\beqan{\begin{eqnarray*}}
\def\eeqan{\end{eqnarray*}}
\def\R{{\mathbb{R}}}
\def\x{\times}
\newtheorem{theorem}{Theorem}
\newtheorem{lemma}{Lemma}
\def\SR{\mbox{\small \sffamily SR}}
\def\SRTX{\mbox{\small \sffamily SRTX}}
\def\MRX{\mbox{\small \sffamily MRX}}
\newcommand{\pbf}{\mathbf{p}}
\newcommand{\sbf}{\mathbf{s}}
\newcommand{\tbf}{\mathbf{t}}
\newcommand{\vbf}{\mathbf{v}}
\newcommand{\wbf}{\mathbf{w}}
\newcommand{\Abf}{\mathbf{A}}
\newcommand{\Gbf}{\mathbf{G}}
\newcommand{\Wbf}{\mathbf{W}}
\title{Femto-Macro Cellular Interference Control
with Subband Scheduling and Interference Cancelation}
\author{Sundeep Rangan, ~\IEEEmembership{Member,~IEEE}
\thanks{S. Rangan (email: srangan@poly.edu) is with
        Polytechnic Institute of New York University, Brooklyn, NY.}
}
\begin{document}


\setcounter{page}{1}

\maketitle
\begin{abstract}
A significant technical challenge in deploying femtocells is
controlling the interference from the underlay of femtos
onto the overlay of macros.
This paper presents a novel interference control method
where the macrocell bandwidth is partitioned into subbands,
and the short-range femtocell links adaptively allocate their
power across the subbands based on a load-spillage power control method.
The scheme can improve rate distribution in the macro network while also providing
opportunities for short-range communication as well.
Moreover, the proposed scheme requires
minimal interference coordination communication between the femtos
and macros, which is one of the main challenges in femtocell systems.
Also, simulations show certain advantages over simpler
orthogonalization schemes or power control schemes without subband
partitioning.  Further modest gains may also be possible with
interference cancelation.
\end{abstract}

\begin{keywords} Femtocells, interference coordination, cellular systems,
wireless communications, interference cancelation, fractional frequency reuse.
\end{keywords}


\section{Introduction}

Femtocells are small wireless access points that are typically
installed in a subscriber's premises,
but operate in a cellular provider's licensed spectrum.
Since femtocells can be manufactured at a very low cost,
require minimal network maintenance by the operator
and can leverage the subscriber's backhaul,
femtocells offer the possibility of
expanding cellular capacity at a fraction of the
cost of traditional macrocellular deployments.
With the surge in demand for wireless data services,
femtocells have thus attracted considerable recent attention,
both in cellular standards bodies such as the
3GPP (Third Generation Partnership Program)
\cite{3GPP25467,3GPP22220,3GPP25820}
and academic research \cite{ChaAndG:08,YehTLK:08}.

However, one of the key technical challenges in deploying
femtocells is the interference between the underlay of
small femtocells and the
overlay of comparatively large macrocells \cite{ChaAndG:08,LopezVRZ:09,Claussen:08}.
This \emph{cross-tier} interference problem differs from interference
problems in traditional macrocellular networks in several important aspects:
First, due to \emph{restricted association}, mobile terminals
(called user equipment or UEs in 3GPP terminology)
may not be able to connect
to a given femtocell even when it provides the closest
serving base station.
Such restrictions can result in strong interference both
from the macro UE transmitter onto the
femto uplink and from the femto downlink
onto the macro UE receiver.
Second, since femtocells are generally not connected directly into
the operator's core network, and instead go through the subscriber's
private ISP, backhaul signaling for interference
coordination is often more limited.
In addition, over-the-air signaling may also be difficult due to the
power disparities between the femtos and macros,
and the large number of femtos per macro.
Finally, femtos are usually deployed in an \emph{ad hoc} manner
as opposed to carefully planned deployments in macrocellular networks.
Such deployments further degrade interference
and also require that
any interference coordination be self-configuring and adaptive.

To address these problems,
this paper proposes a novel method for macro-femto interference control
based on subband scheduling, power control and, if available, interference cancelation.
In the proposed method, both the uplink (UL) and downlink (DL)
bands of the macrocell overlay network are first partitioned into \emph{subbands}.
Using standard fractional frequency reuse (FFR) methods
\cite{3GPPR1050507:05,LeiZZY:07,HanPLAJ:08,StoVis:08},
the macro transmit power is varied across the subbands to create
different reuse patterns and interference conditions in each subband.
It is well-known that this subband partitioning can
improve the rate distribution in the macro network.

Following the observations of \cite{OhCCL:10},
the basic thesis of this paper is that subband partitioning can also 
create improved opportunities for femtocell communication as well.
Specifically, we argue that, when the macrocell network employs subband partitioning,
most femto links will be able to transmit in one or more subbands
with minimal interference into the macro network.  
Moreover, using a \emph{load-spillage} power control method similar
to \cite{HandeRCW:08}, we provide a practical method by which femto
transmitters can identify the interference effect in each subband 
and optimize their power distribution across the subbands appropriately.

There are several appealing features of the proposed
interference control method:

\begin{itemize}
\item \emph{Minimal cross-tier communication:}
The proposed method requires no explicit
cross-tier communication between the femto and macrocells.
As discussed above, femto-macro coordination communication
is logistically difficult.
The proposed load-spillage method avoids such communication
and simply requires that the macro base
stations measure the total noise rise due to the femtos and
broadcast a \emph{load} factor.
The femtocell transmitters read the load
factor and adjust their transmit power
according to a simple rule.

\item \emph{Frequency reuse}:
As described in \cite{OhCCL:10},
the subband partitioning enables the spectrum
to be reused efficiently between the femto and macrocells.
Our simulations reveal that, when combined with
optimal power allocation across the
subbands, the subband partitioning can result in significant
additional throughput in the same spectrum as the macro.
This throughput can be higher than orthogonalization
between the femto and macro layers, and -- at least in the UL --
higher than the throughput with
simple reuse 1 without subband scheduling.

\item \emph{Improved throughput with interference cancelation:}
Finally,  we show in simulations that modest
throughput gains may also be possible when the femto receivers
are capable of interference cancelation via joint detection of the
desired signal and interfering macro signal.  
\end{itemize}

\subsection{Previous Work}
A general introduction on femtocells,
including descriptions of the femto-macro interference problem,
can be found in the recent
text \cite{ZhangRoche:10}, as well as several general papers
\cite{ChaAndG:08,YehTLK:08,Claussen:08,LopezVRZ:09}.
To address the femto-macro interference problem,
a number of works have considered power control and interference
mitigation techniques.
In addition to the works above,
the paper \cite{ShiReed:07}, for example,
studies adaptive attenuation
and power control methods for UMTS femtos, while
\cite{ChandAndMSG:09,JoMMY:09} study more sophisticated
power control methods based on distributed utility maximization.
These power control methods have the benefit that they
can be largely implemented in existing cellular
systems, and based on simulations such as the Femto Forum study
\cite{FemtoForum:10}, appear to work well in a range
of circumstances.
In addition, power control admits precise analyses.
For example, the works \cite{KishoreGPS:01,KishoreGPS:05,ChaAnd:09a,ChaAnd:09b}
provide analytic expressions for the outage capacity
under power control with various cell selection methods.

The contribution of this paper is to combine
these femtocell power control methods with two other
concepts:
subband partitioning and interference cancelation.

Subband partitioning, and
the related concept of fractional frequency reuse (FFR),
are well-known in the cellular industry.
Some early descriptions can be found in
\cite{3GPPR1050507:05,LeiZZY:07}
with further analyses for distributed algorithms in
\cite{HanPLAJ:08,StoVis:08}.
FFR has attracted considerable recent attention since it
is particularly easy to implement in OFDMA systems such as in
3GPP's long-term evolution (LTE) standard \cite{Dahlman:07}.

However, the application of FFR to the femto-macro
interference problem considered here requires some modifications.
FFR methods such as
\cite{3GPPR1050507:05,LeiZZY:07} employ a static power allocation
across the subbands.  Static allocations may work well in the macrocellular
network where the interference conditions between cell sites are stable.
However, short range femto links can have arbitrary and changing
locations relative to the macro and thus need adaptive schemes.
Several adaptive FFR methods have also been proposed such
as \cite{HanPLAJ:08,StoVis:08}, but would require
coordination between the femto and macros in the context of the problem here.  
As discussed above,
this coordination can be logistically difficult.
The method proposed here requires no explicit coordination
other than the broadcast of the load factors from the macro receivers.
We will see (see Theorem \ref{thm:powCtrl})
that, when the interference at the femtos is dominated
by the macro transmitters, the proposed scheme remains optimal.

Our proposed method also has benefits the
femto-macro FFR method in \cite{OhCCL:10}.
That work uses the same partitioning discussed in
Section \ref{sec:subbband}, but with the femto assigned to a single
subband based on the DL SINR.  The method proposed here permits
the femto UE or BS
to transmit on all subbands \emph{simultaneously}
Also, the DL SINR metric used in \cite{OhCCL:10} is indirect measurement
that is only statistically correlated
to the interference effect of a femto's transmission.  This work
uses a load-spillage mechanism of \cite{HandeRCW:08} 
to determine the precise interference effect in each subband and allocate 
the power across the subbands appropriately.

The second main novel component of this paper
is interference cancelation (IC) via joint detection of the
desired femto signal and interfering macro signal.
The theoretical benefits of interference cancelation and joint detection
are well-known \cite{Ahlswede:71,CoverT:91}.
Also, due to improved computational resources,
successive IC (SIC)
has begun to be implemented in commercial cellular systems
for both macrocells \cite{Andrews:05,BourdreauPGCWV:09}
and femtocells \cite{ShiReed:07}.
However, for cellular systems, SIC has been generally
used in the UL, where power control can be used to shape the
receive powers of multiple mobiles with a favorable profile for SIC.
What is new here is that we will show
that when subband partitioning is performed
in the macrolayer, interference conditions will naturally arise such
that IC may be used by the short-range links to cancel the
interference from the macro.  Unfortunately, we only observe gains with
joint detection and not with SIC alone.

It should be pointed out that a limitation in this work
is that cell selection between femtos and macros is fixed.
In reality, femto-macro cell selection is itself an important problem
and typically depends on loading, mobility and whether
the femtos employ open or closed access.  The reader is referred to
\cite{YeuNan:96,JabFuh:97,KleinHan:04} for a discussion
of these issues.

\section{Macrocell Subband Partitioning}
\label{sec:subbband}

The proposed macro-femto interference control method
assumes subband partitioning of the macro overlay network
similar to that of Oh \emph{et.\ al.} in \cite{OhCCL:10}.
In subband partitioning, both the UL and DL bands of the macro
overlay network are partitioned into \emph{subbands},
which are contiguous non-overlapping frequency intervals.
For the macro-femto interference control method, we consider the simple
subband partitioning shown in Fig.\ \ref{fig:macroSubband}.
More general partitions can also be considered.

\begin{figure}
\begin{center}
  \includegraphics[width=2.5in]{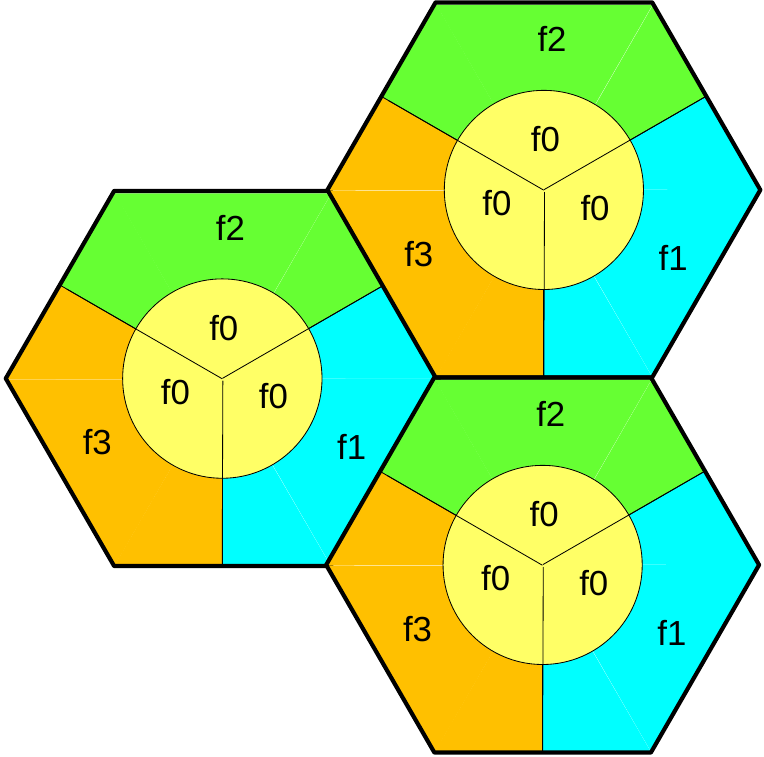}
\end{center}
\caption{Subband partitioning of the macrocells.  Each cell site
occupies one hexagon with the cell site partitioned into three 120
degree sectors called cells.  The frequency subband $f_0$ is
used in all cells and serves mobiles close to the
cell.  The other subbands $f_1$ to $f_3$ are used in only
cell per cell site and serves mobiles close to the cell edge. }
\label{fig:macroSubband}
\end{figure}

As shown in the figure, we assume the
cell sites in the macrocell network are
three way sectorized.  Following the 3GPP terminology, we call
each sector in the cell site a \emph{cell}.
With this sectorization,
we partition both the UL and DL bands
into four subbands whose center frequencies denoted $f_0$ to $f_3$.
With some abuse of notation, we will use $f_k$ to denote
the subband as well as its center frequency.

One of the four subbands, $f_0$, is reused in every cell,
and will be called the reuse 1 subband.
This subband will generally serve mobiles who are close
to the cell and is thus shown in the figure in the inner circle
of the cell site.
Each of the remaining subbands -- $f_1$ to $f_3$ --
is used in only one cell per cell site.  These subbands will be called
the reuse 3 subbands.  The reuse 3 subbands are arranged so that
neighboring cells do not use the same subbands, so the interference
in these subbands is low.  These subbands are used to serve the mobiles
at the edge of the cell where the mobile would experience
high interference if served in the reuse 1 subband.

\begin{figure}
\begin{center}
  \includegraphics[width=3in]{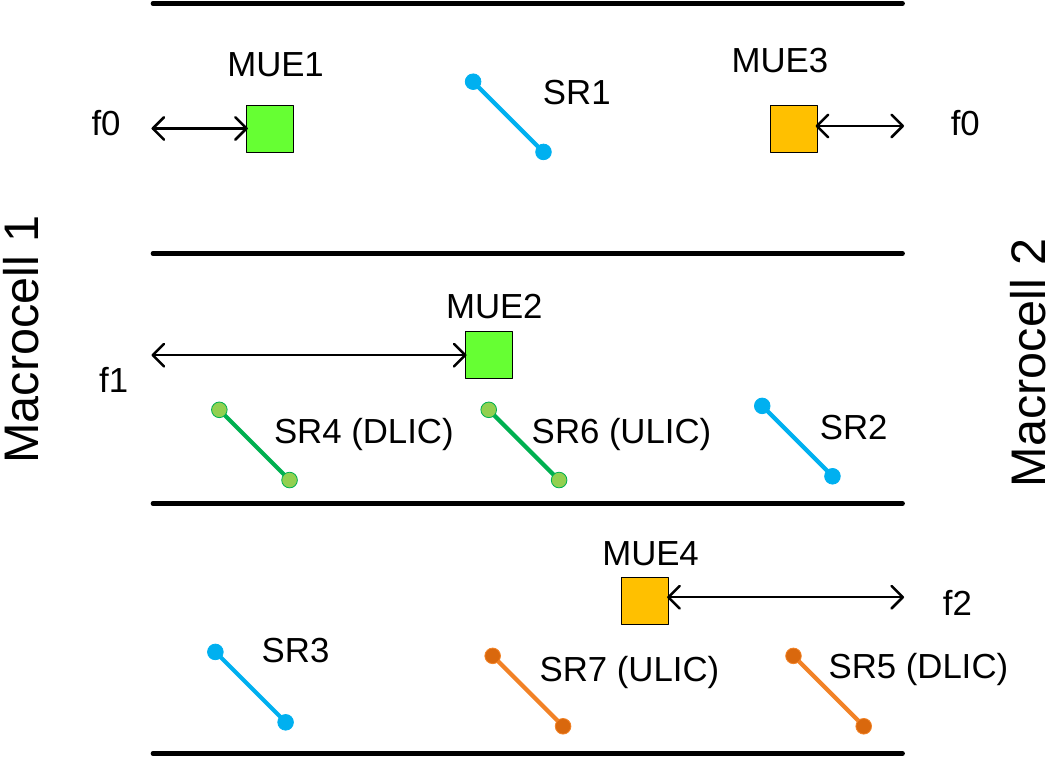}
\end{center}
\caption{Typical locations of macro UEs (MUE1--4) in each subband,
along with possible locations for short-range links (SR1--7).
SR1--3 can operate in either the UL or DL and do not require
interference cancelation (IC).  SR4 and 5 would operate in the DL
with IC of the DL signal from the macrocell base stations.
SR6 and 7 operate in the UL with IC of the macro UE transmissions.}
\label{fig:macroFemtoLinks}
\end{figure}

The basis of subband partitioning is that in any multiuser
wireless system, there is a tradeoff between maximizing
the bandwidth  (which is obtained
by reusing the spectrum in different links), or minimizing
interference (which is obtained by orthogonalizing bandwidth
in different links).
Subband partitioning enables this tradeoff to be optimized
differently for mobiles depending on their geometry.
Consider, for example, the transmissions of two macrocells
in Fig.\ \ref{fig:macroFemtoLinks}.
Macrocell 1 serves two macro UEs, MUE1 and 2, while macrocell 2
serves macro UEs MUE3 and 4.
MUE1 and MUE3 are close to their serving cells and thus
experience little interference from the other cell.
These mobiles can be served in the reuse 1 subband, $f_0$,
where the full bandwidth of the subband can be reused in both cells.
In contrast, MUE2 and 4 are at the cell edge.  If served in the same
subband, the interference would be high.  For these mobiles,
the rate may be higher by serving them on orthogonal bandwidths,
$f_1$ and $f_2$, which eliminates the interference from the other cell.

It is well-known that such subband partitioning can often improve the rate
distribution in the macrocellular network: usually the edge of cell
mobile rate can be increased
 with a relatively small loss in rate for the good mobiles.
However, what is important for this work is the observation
of \cite{OhCCL:10} that subband partitioning
also creates improved opportunities for
short-range transmissions with minimal interference into the macro network.
To see this property, consider again
Fig.\ \ref{fig:macroFemtoLinks} which also shows
three possible locations for short range links, labeled SR1 to 3.
For now, ignore the short range links SR4--7 which we will discuss
in Section \ref{sec:ic}.

First consider a short-range link, such as SR1, at the macrocell
edge (i.e.\ between the two macrocells).
A short range link such as SR1
can operate in the reuse 1 subband, $f_0$, in either the UL or DL.
For the UL, the short range link is far from either macrocell,
so the transmissions should have minimal interference at the macrocell
receivers.
Also, since the UEs in $f_0$
are typically close to their cell, they will also be far from SR1.
Therefore, transmissions on SR1 in the DL $f_0$
will also have minimal interference
at the macro UEs such as MUE1 and 3.

Now consider the cases, such as SR2 or SR3,
when the short range links are close to one of the macrocells.
In these cases, the short range links can transmit in one of the
subbands $f_1$ or $f_2$ not being used by the macrocell that it is close to.

In this way, the subband partitioning provides, wherever the short-range
link is located, frequency locations
on which it can transmit where the interference will be minimal.
This property is the basic observation of \cite{OhCCL:10} and
will serve as the basis of the proposed femto-macro interference
control method here.

\section{Subband Interference Power Control}
\label{sec:powControl}

In the method in \cite{OhCCL:10}, each femtocell transmitter
(the femto BS in the DL and femto UE in the UL)
is statically assigned to a single subband based on the DL SINR.
In the proposed method of this paper, the femtocell transmitter
can simultaneously use all subbands, with a power level in each subband
based on the precise interference conditions.
This method enables a larger fraction of the bandwidth to be used and
to select the powers in the subbands based on the actual interference
conditions.

To describe the proposed method mathematically,
suppose that there are $K$ total subbands.  For example, in the subband
partitioning of the previous section, there would be $K=4$
subbands in the UL or DL.
We will assume there are $N$ short-range links denoted $\SR_j$,
$j=1,\ldots,N$.
We will use the term short-range links, as opposed to femto links,
since the method does not require that short-range links
using the same technology as the macro.
Also, the macrocellular system could be TDD or FDD.  In the case of TDD,
the short-range links will use the UL and DL time periods
aligned with the macro.

We will assume that in subband $k$, there are
$M(k)$ macro receivers, which we will denote by
$\MRX_{ik}$, $i=1,\ldots,M(k)$.
In an UL subband, the macro receivers are the
macrocell base stations using that subband, and in the DL,
the macro receivers are the UEs using that subband.

Now let $G_{ijk}$ be the path gain from the short-range
transmitter $\SRTX_j$ to the macro receiver $\MRX_{ik}$
in subband $k$.  We will let $p_{jk}$ be the transmit
power of $\SRTX_j$ in subband $k$, so that the total interference
at $\MRX_{ik}$ in subband $k$ from all short-range links is
\beq \label{eq:qik}
    q_{ik} = \sum_{j=1}^N G_{ijk}p_{jk}.
\eeq

To implement the femto-macro interference power control,
we will assume that each short range transmitter, $\SRTX_j$,
can learn the path gains $G_{ijk}$ to all the macro receivers
in all the $K$ subbands.  To perform this estimation in the UL subbands,
the macrocell base stations would need to intermittently transmit
some reference signal in the UL.  Of course, this would require
that the regular UL macrocellular transmissions are suspended
in some quiet period and thus requires some overhead.
Similarly, in the DL, the path losses to the
macro UEs could be estimated by having a quiet period in the DL
where the macro UEs suspending their regular reception and
transmit reference signals that the short-range transmitters can hear.
Alternatively, in a TDD system,
a short range transmitter could estimate
an UL path gain to the base stations
from a DL reference such as the pilot signals,
and an DL path gain to the macro UEs from the UEs uplink power control
signals.
These reference signals can also be used in a FDD system, but with some
error due to fast fading.

Now let $\pbf_j$ be the vector of powers used by $\SRTX_j$ across the $K$
subbands,
\beq \label{eq:pvecj}
    \pbf_j := [p_{j1} \cdots p_{jK} ]^T.
\eeq
The power control problem is for each $\SRTX_j$
to select the power vector  $\pbf_j$
to control the interference levels $q_{ik}$ in \eqref{eq:qik},
while maximizing the throughput.
This is a distributed control problem since all the
short-range transmitters have some effect on the interference levels.

Under the assumptions we will make below, this power control
problem is somewhat trivial and can be solved by a large number
of distributed methods.  See, for example, the survey
\cite{ChiangHLT:08}.
As one possible solution, we present a
\emph{load-spillage} method similar to \cite{HandeRCW:08}.
Each macro receiver $\MRX_{ik}$
broadcasts a \emph{load factor}, denoted $s_{ik}$, on subband $k$.
The load factor is simply some positive scalar.
Each short range transmitter $\SRTX j$ then
computes a per subband \emph{spillage} given by
\beq \label{eq:spillage}
    r_{jk} = \sum_{i=1}^M G_{ijk}s_{jk}.
\eeq
The \SRTX $j$ can then select any set of powers $p_{jk}$ subject to
\beq \label{eq:powCons}
    \sum_{k=1}^K r_{jk}p_{jk} \leq \lambda_j,
\eeq
where $\lambda_j$ is some pre-determined positive scalar
which we will call the \emph{power constraint bound}.

The load-spillage rule \eqref{eq:powCons} can be understood
qualitatively as follows:  The rule \eqref{eq:powCons} permits
the short-range links to transmit a high power $p_{jk}$
precisely when the corresponding spillage $r_{jk}$ is small.
Now, the spillage \eqref{eq:spillage}
$r_{jk}$ is high when there is some macro receiver $\MRX_{ik}$
that is either close to the short-range transmitter (so that
$G_{ijk}$ is large) or the macro receiver
broadcasts a high load factor $s_{ik}$.
Thus, the spillage rule \eqref{eq:powCons} will cause the short-range transmitters to allocate power away from subbands
where there are close by macro receivers or macro receivers indicating
high load.

We can, in fact, mathematically show that under certain simplifying
assumptions, the load spillage rule can be optimal.
Specifically suppose that, associated with each short-range link $j$,
there is some ``utility" $U_j(\pbf_j)$ of the powers $\pbf_j$.
This utility could be, for example,
the rate or some function of the rate.  The implicit assumption
is that the utility does not depend on the power levels
in other short-range links.  That is, the interference
at the short-range receivers
is dominated by the interference from the macro transmissions.

Given such a utility function and power constraint \eqref{eq:powCons},
a natural strategy for each short-range
transmitter $\SRTX_j$ is to select the power vector $\pbf_j$ to maximize
the utility:
\beq \label{eq:Uopt}
    \max_{\pbf} U_j(\pbf_j)
\eeq
subject to \eqref{eq:powCons}.  In the case when $U_j(\pbf_j)$
is the total rate across the $K$ subbands, and the rate is given
by the Shannon capacity, this optimization reduces to a standard
waterfilling procedure \cite{CoverT:91}.

However, whatever the utility function $U_j(\pbf_j)$, we can
show the following result:
Say a selection of power vectors $\pbf = \{ \pbf_j, j=1,\ldots,N\}$
is \emph{Pareto optimal} if any other selection of
power vectors that results in a lower interference $q_{ik}$ on
on some subband $k$ for some macro receiver $\MRX_{ik}$,
or higher utility $U_j$ for some short range link $\SR_j$, must
also result in a higher interference for some other
macro receiver or lower utility in some other link.

\begin{theorem}  \label{thm:powCtrl}
Suppose that the utility functions $U_j(\pbf_j)$
are strictly increasing in each component $p_{jk}$.
Then,
\begin{itemize}
\item[(a)] If the load factors $s_{ik}$ and power constraint
bounds $\lambda_j$ are positive and each short-range link $j$
selects the power vector according to \eqref{eq:Uopt},
then the resulting power vectors are Pareto optimal.
\item[(b)] If, in addition, the utility functions are concave
and twice differentiable and $G_{ijk} > 0$ for all $i$, $j$ and $k$,
then the converse is true.
That is, if $\pbf$ is any set of power vectors that
is Pareto optimal, then there exists nonnegative $s_{ik}$ and
$\lambda_j$ such that each $\pbf_j$ satisfies \eqref{eq:Uopt}.
\end{itemize}
\end{theorem}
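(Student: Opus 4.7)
For part (a), my plan is to exploit the accounting identity that follows from swapping the order of summation in the definition \eqref{eq:spillage} of spillage: for any power profile $\pbf$,
\[
\sum_{i,k} s_{ik}\, q_{ik} \;=\; \sum_{j,k} r_{jk}\, p_{jk}.
\]
Assume for contradiction that some $\tilde{\pbf}$ Pareto-dominates the candidate $\pbf$. Since $s_{ik}>0$ and $\tilde{q}_{ik}\le q_{ik}$ for every $(i,k)$, the identity gives $\sum_{j,k} r_{jk}(\tilde{p}_{jk}-p_{jk})\le 0$, with strict inequality if any interference is strictly reduced. On the other hand, strict monotonicity of $U_j$ and existence of a finite optimum force the constraint \eqref{eq:powCons} to be tight at each $\pbf_j$; then $U_j(\tilde{\pbf}_j)\ge U_j(\pbf_j)$ implies $\sum_k r_{jk}\tilde{p}_{jk}\ge \sum_k r_{jk}p_{jk}$, because a strictly better $\tilde{\pbf}_j$ would have to violate the constraint (contradicting optimality of $\pbf_j$), while any feasible equalizer matches the tight value. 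Summing over $j$ and combining with the interference-side inequality yields an immediate contradiction as soon as any strict Pareto improvement occurs, whether on the utility or interference side.

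For part (b), I would apply the standard scalarization argument for convex vector optimization. Concavity of each $U_j$ together with linearity of \eqref{eq:qik} in $\pbf$ makes the achievable region of $(U_j,-q_{ik})$-tuples convex, so Pareto optimality of $\pbf$ lets me separate this region from the open cone of strict improvements. This yields nonnegative multipliers $\alpha_j$ on utilities and $\beta_{ik}$ on interferences, not all zero, for which $\pbf$ maximizes the aggregate
\[
\sum_j \alpha_j\, U_j(\pbf_j) \;-\; \sum_{i,k} \beta_{ik}\, q_{ik}
\]
over all nonnegative power profiles. Setting $s_{ik}:=\beta_{ik}$ and $\lambda_j:=\sum_k r_{jk}\, p_{jk}$, the aggregate decouples across $j$, and its stationarity condition $\alpha_j\,\partial U_j/\partial p_{jk}\le r_{jk}$ (with equality at $p_{jk}>0$) matches the KKT condition for the per-link problem \eqref{eq:Uopt} with Lagrange multiplier $1/\alpha_j$. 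Concavity then lifts these KKT conditions to global optimality in \eqref{eq:Uopt}.

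The step I expect to be the main obstacle is the degenerate case in part (b) where some $\alpha_j=0$, which spoils the substitution $\mu_j=1/\alpha_j$ in the Lagrangian match. I would handle it by noting that under $G_{ijk}>0$ and with $s_{ik}=\beta_{ik}$ yielding $r_{jk}>0$, the scalarized problem with $\alpha_j=0$ forces $\pbf_j^*=\mathbf{0}$; choosing $\lambda_j=0$ then makes \eqref{eq:Uopt} trivially consistent. A smaller technical subtlety is that the tightness step in part (a) uses $r_{jk}>0$ for at least one $k$, which follows from positivity of $s_{ik}$ together with the benign assumption that each short-range transmitter has some nonzero path gain to at least one macro receiver per subband.
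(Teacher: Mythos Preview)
Your proof of part (a) is essentially the paper's argument: both hinge on the swap-of-sums identity $\sum_{i,k}s_{ik}q_{ik}=\sum_{j,k}r_{jk}p_{jk}$ together with the fact that any weak utility improvement forces $\sum_k r_{jk}\tilde p_{jk}\ge\lambda_j$ (strict if the improvement is strict). If anything, your version is a bit more complete, since the paper only explicitly treats the case where some utility strictly improves and shows an interference must rise, whereas you handle the interference-side strict improvement as well.

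For part (b) the two proofs share the same duality skeleton but differ in how the multipliers are obtained. The paper linearizes locally: Pareto optimality implies that for every perturbation $\Delta\pbf$ some row of the stacked matrix $\Abf=\bigl[\begin{smallmatrix}-DU(\pbf)\\ \Gbf\end{smallmatrix}\bigr]$ is nonpositive on $\Delta\pbf$, and a short minimax lemma (equivalent to Gordan/Farkas) then produces nonnegative weights $(t_j,s_{ik})$ with $\wbf^{T}\Abf=0$. The key payoff of this route is that the resulting stationarity $t_j\,\partial U_j/\partial p_{jk}=\sum_i s_{ik}G_{ijk}$, combined with $G_{ijk}>0$ and strict monotonicity, immediately forces $t_j>0$ for \emph{every} $j$; the degenerate case you worry about simply cannot occur. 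Your supporting-hyperplane scalarization is the more textbook path and is perfectly valid, but it leaves you to dispose of $\alpha_j=0$ by hand. Note that your proposed fix (``$\alpha_j=0$ forces $\pbf_j=0$, take $\lambda_j=0$'') tacitly needs $r_{jk}>0$ for all $k$, i.e., at least one $\beta_{ik}>0$ in every subband; the cleaner way out is to observe, exactly as the paper does, that the first-order condition itself rules out $\alpha_j=0$ once $G_{ijk}>0$ and $\partial U_j/\partial p_{jk}>0$.
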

\begin{proof} See Appendix \ref{sec:proof}.
\end{proof}

The consequence of the theorem is that the load factors
$s_{ik}$ and $\lambda_j$ provide a parametrization
of all the Pareto optimal power vectors.  Thus,
the power vectors obtained for any $s_{ik}$ and $\lambda_j$
are guaranteed to be Pareto optimal.  Moreover,
by appropriately selecting the parameters, any optimal point
can be achieved.

However, it is important to recognize some caveats.
Most importantly, as mentioned above, the implicit assumption
in the utility function is that the power selection in
one short range link does not affect other short-range links.
This assumption is valid when the interference at the short range links
is dominated by either thermal noise or interference from the macro
transmitters.   When there is significant
interference between short-range links, the problem is
significantly more complex.
In fact, when $K > 1$, the problem is similar to the parallel channel
interference power control problem
that arises in DSL cross-talk \cite{YuGC:02}.  This problem
is generally non-convex and NP-hard.
The load spillage method can be extended to provide a suboptimal
solution to the problem.  For example, following the lines of
 \cite{CendrillonHCM:07},  the short-range receivers could themselves
 broadcast load factors and have their interference counted
 in the spillage term \eqref{eq:spillage}.
 This procedure would at best converge to some local maxima.

Also, while the theorem states that, in theory, any Pareto
optimal power vector is achievable, it provides no mechanism for
selecting the appropriate load factors $s_{ik}$ or bounds $\lambda_j$.
In the simulations below, the macro receivers will select the load factors
$s_{ik}$ simply
based on measuring the noise rise and adjusting the load factors
to keep the noise rise at a given target.  However, selection of the
bounds $\lambda_j$ or optimal selection of the load factors
$s_{ik}$ would generally require some femto-macro communication
which we are not considering.

There is indeed significant possibilities for more sophisticated
interference coordination here.
However, we will not consider such methods in this work,
as our main objective here is to show that, even simple
power control methods that use minimal coordination can
provide good performance.

\section{Interference Cancelation}
\label{sec:ic}

The proposed femto-macro interference control method
can also be used in conjunction with interference
cancelation (IC) \cite{CoverT:91}
for further throughput gains.
To see this, suppose that the short-range receiver has the computational
ability to perform IC on the macrocellular signals.
Now return again Fig.\ \ref{fig:macroFemtoLinks}
and consider first the short range links SR4 and SR5.
SR4, being close to macrocell 1, will receive the DL signal
on $f_1$ strongly.  Also, the signal on that subband will typically
be at low rate since it is intended for an edge of cell
mobile, MUE2.  Thus, with a high probability, the receiver on SR4
will be able to jointly decode and cancel the interfering signal from
macrocell 1 while receiving the signal from its desired transmitter.
Similarly, the receiver SR5 can perform IC with high likelihood
on the DL signal from macrocell 2 on $f_2$.

Now for the UL, consider short-range links SR 6 and 7.  These
links are close to the macro UEs, MUE 2 and 3.  Since MUE 2 and 3
are far from their serving macrocell, they will likely transmit
at a low rate, and thus can be canceled by the nearby short range receivers.

In this way, we see that when the short range link receivers
are capable of IC, with high likelihood, the interference
cancelation may be able to remove interference in a number of the
subbands.  Moreover, IC can be performed entirely opportunistically
in that the macro does not need to change its scheduler policy at all,
or even be aware of the short-range links.  The regular operation
of the macro under the subband partitioning will naturally open
up possibilities where IC can be used.

\section{Numerical Simulation}

\begin{table}
\begin{center}
\begin{tabular}{|p{1in}|p{2in}|}
\hline
Parameter & Value \\ \hline
Macro topology & 24 hexagonal cell sites with wrap around and
3 cells per site.  \\ \hline
Femto layout & Uniformly distributed at density of
10 per macrocell \\ \hline
Number macro UEs & 10 per macrocell. \\ \hline
Numer femto UEs & 1 per femtocell. \\ \hline
Macro cell radius & 500 m \\ \hline
Femto link distance & 20 m \\ \hline
\multirow{2}{*}{Macro antenna pattern} & $A(\theta) = -\min\left\{ 12\left(\frac{\theta}{\theta_{3dB}}\right)^2, A_m \right\}$, \\
&  $\theta_{3dB} = $ 70 degrees and $A_m =$ 20 dB \\ \hline
Femto antenna pattern & Omni \\ \hline
Lognormal shadowing (macro BS involved) & 8.1 dB,
50\% inter-site correlation;
100\% intra-site correlation
\\ \hline
Lognormal shadowing (macro BS not involved) & 4 dB, no correlation
\\ \hline
\end{tabular}
\end{center}
\caption{Parameters for simulation of the femto-macro interference power control algorithm.}
\label{tbl:simParam}
\end{table}

The method was evaluated under a simulation model
similar to Femto Forum interference
management study in \cite{FemtoForum:10} with
parameters shown in Table \ref{tbl:simParam}.
The macro network is composed of a 24
hexagonal cell sites arranged in a 4 x 6 grid with wrap
around to eliminate edge effects.  Each site has three cells
with the antenna pattern $A(\theta)$ given in Table \ref{tbl:simParam}.
The bandwidth is partitioned into reuse 1 and reuse 3 subbands
as described in Section  \ref{sec:subbband}, with a fraction
$p_1=0.5$ of the bandwidth
allocated to the reuse 1 subband.

Macro UEs are distributed uniformly with an average of 10 macro UEs per
macrocell.  We assume that the macro UEs connect to the strongest macrocell.
In each cell, the fraction $p_1$ of the macro UEs with the
highest SINR are assigned to the reuse 1 subband, and the remaining
fraction of $1-p_1$ macro UEs are assigned to the reuse 3 subband.
We assume that the macro network uses OFDMA and the
macro UEs within each subband are assigned equal, orthogonal
fractions of the subband bandwidth.
In the DL, we assume each macro BS transmits equal power density
to all the macro UEs.  In the UL, we assume power control
at each cell sets the macro UEs served in that cell
to have equal receive power density.
Moreover, we assume that the receive power densities are adjusted
so that all cells experience
an interference to thermal (IoT) of 10 dB.
In both the UL and DL, the rate is then selected
based on the received signal-to-interference
and noise ratio (SINR).  We assume that the links achieve a rate
equal to 3 dB below Shannon capacity, with a maximum spectral efficiency of
5 bps/Hz.

The underlay of femtocell BSs are randomly placed with a uniform distribution
at a density of 10 femto BSs per macrocell.
Each femto BS serves one femto UE
at a random location 20m from the femtocell BS.
Note that the femtocell links are considerably smaller than
the macro cell radius of 500m.

The path loss models are shown in Table \ref{tbl:pathLossMod}
are also based on \cite{FemtoForum:10}.  The wall loss values
are the lower values in \cite{FemtoForum:10}
under the assumption that all the macro UEs are
outside, each femtocell (both the BS and UE) is inside a house, with different
femtocells in different houses.
We have chosen the lower wall loss values, since
higher values would result
in a more favorable case for the proposed algorithm as the
separation between the femto and macro would be greater.

\begin{table}
\begin{center}
\begin{tabular}{|p{1in}|p{2in}|}
\hline
Link & Path loss (dB) \\ \hline
MBS $\leftrightarrow$ MUE & $15.3 + 37.6\log_{10}(R)$ \\ \hline
MBS $\leftrightarrow$ FUE or FBS & $15.3 + 37.6\log_{10}(R) + L_{ow}$, $L_{ow}=$ 10 \\ \hline
FBS $\leftrightarrow$ FUE (serving link)  & $38.46 + 20\log_{10}(R) + 0.7R$ \\ \hline
FBS or FUE $\leftrightarrow$ FUE or FBS in different femtocell &
$15.3 + 37.6\log_{10}(R) + L_{ow}$, $L_{ow}=$ 20 \\ \hline
FBS or FUE $\leftrightarrow$ MUE &
$15.3 + 37.6\log_{10}(R) + L_{ow}$, $L_{ow}=$ 10 \\ \hline
\end{tabular}
\end{center}
\caption{Path loss models based on \cite{FemtoForum:10}.
$L_{ow}$ is the outer wall loss and $R$ is distance in meters.}
\label{tbl:pathLossMod}
\end{table}

Under these assumptions, we generated a random ``drop" of the femto
and macro network elements and then ran two simulations of
the proposed femto-macro interference control algorithm in Section \ref{sec:powControl}:
one simulation for the DL and a second for the UL.
In each simulation, the load factors in the load-spillage algorithm
were set in an iterative manner as follows.

In the DL, the ``victims" of the femto interference are the macro UEs.
Each macro UEs begins by broadcasting some initial load factors across the subbands.
The femto BS transmitters compute the corresponding spillage terms
and then adjust their power across the subbands to maximize the rate to the femto UE.
The macro UEs then compute the resulting interference and compare the interference
to a target level.  In this simulation, the target level was set to the
interference that would degrade the rate of macro UE by no more than 10\%.
The load factors are then increased or decreased depending on whether
the measured interference is above or below the target.  The algorithm was run
for 50 iterations at which point we observed that more than 99\% of the macro UE experienced an interference level below 0.1 dB of the target.

Similarly, in the UL, the victims of the femto interference are the macro BS receivers.
In this case, the load factors were adjusted to bound the interference rise from
the femtos to 0.5 dB.  This noise rise corresponds to
an approximately 10\% loss in rate in the power-limited regime.

\begin{figure}
\begin{center}
  \includegraphics[height=2.5in]{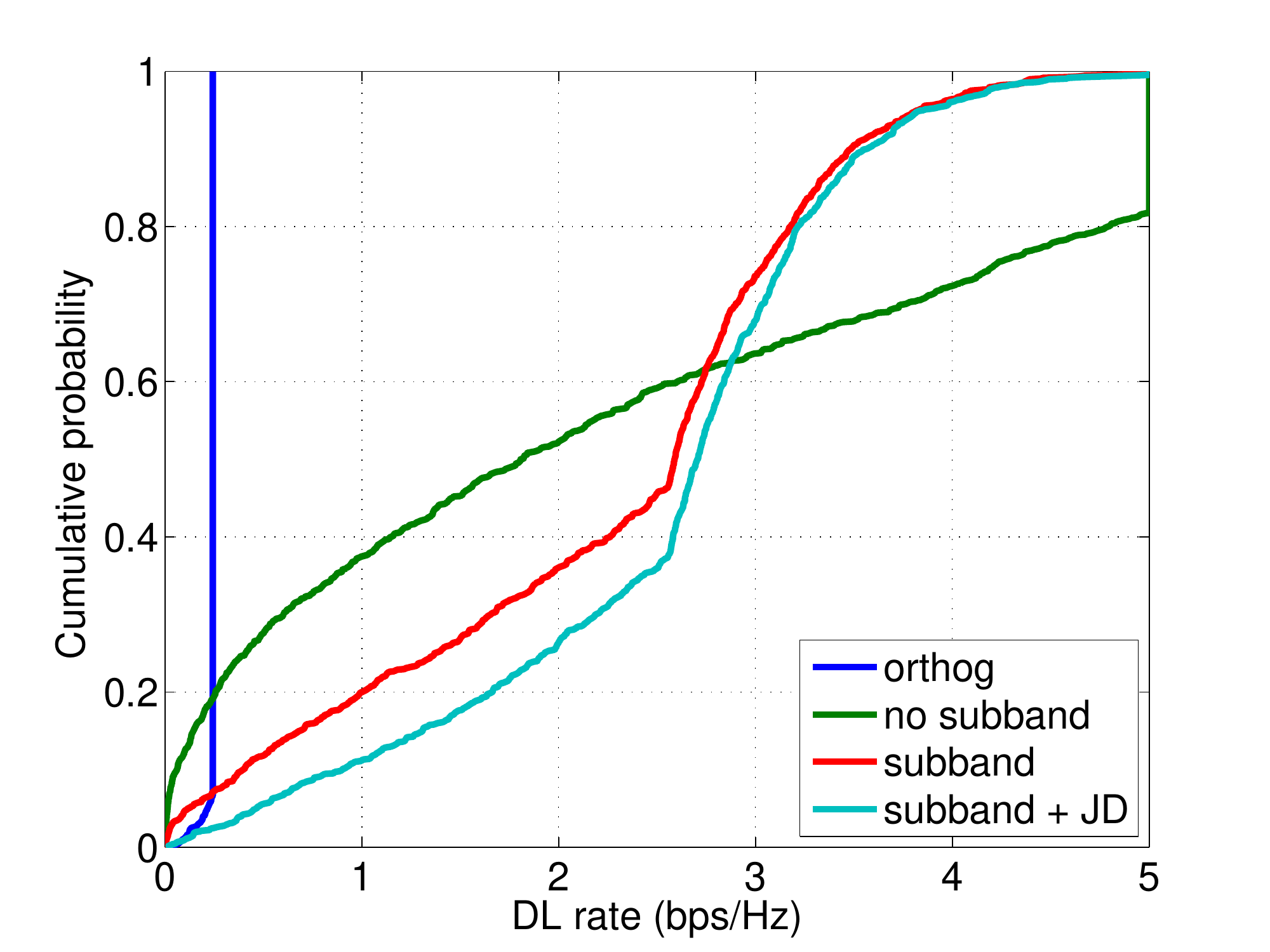}

  \includegraphics[height=2.5in]{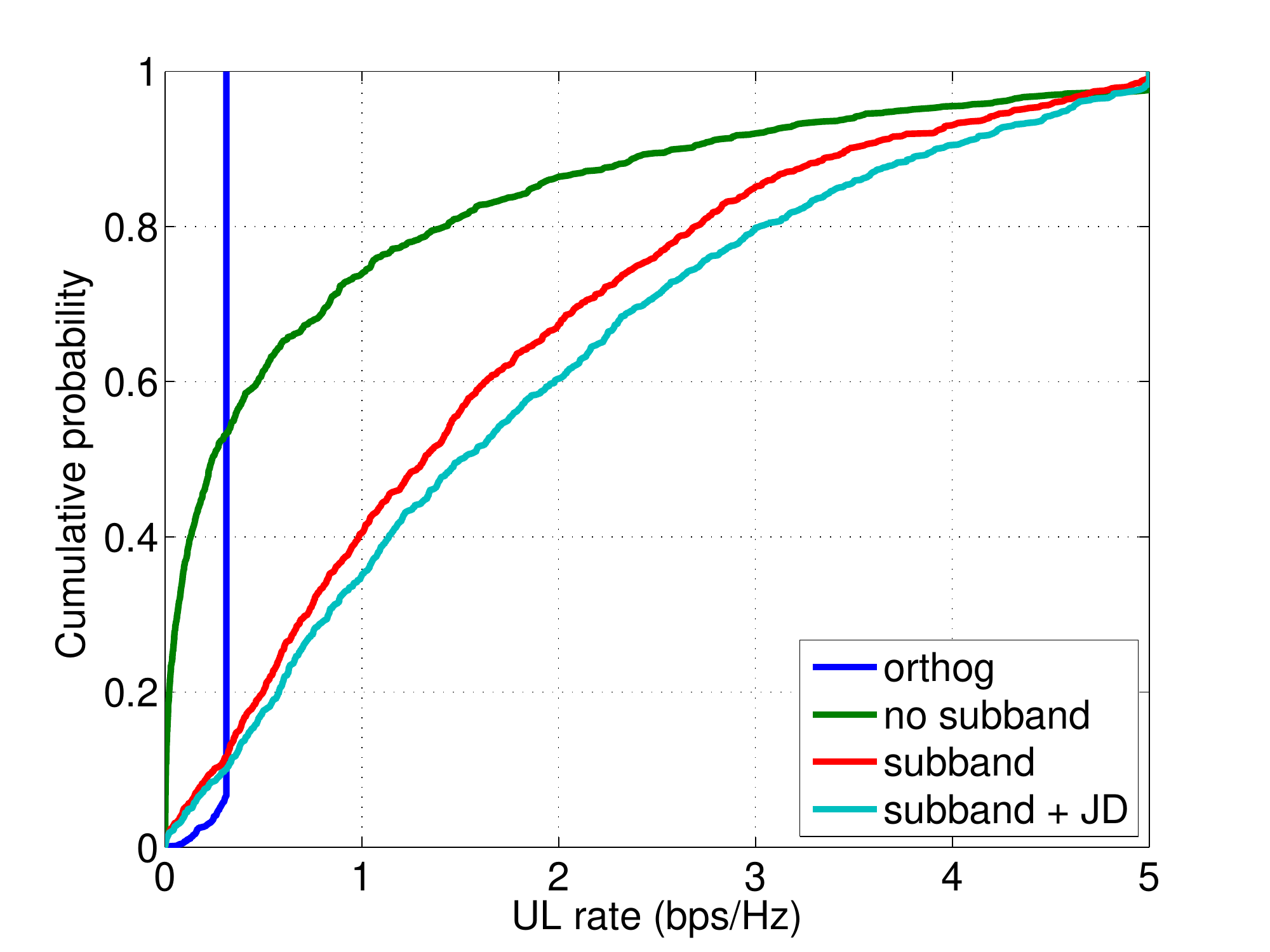}

\end{center}
\caption{Distribution of rates available on short-range links
using subband scheduling with interference cancelation (``subband"),
and subband scheduling with interference cancelation via joint
detection of the desired femto signal and interfering macro signal
(``subband+JD").
Also shown are the rate distributions with simple orthogonalization
(``orthog") and reuse 1 without subband partitioning (``no subband").
Top panel shows the DL and bottom is the UL.
See text for simulation assumptions.
}
\label{fig:specShareSim}
\end{figure}

The distribution of the DL and UL rates of the femto links
after the final iteration of the load-spillage algorithm
are shown in Fig.\ \ref{fig:specShareSim}.
The curves labeled ``subband" are the CDFs without interference
cancelation (IC), while ``subband+JD" includes IC via
joint detection of the macro interfering signal and 
desired femto signal. As a link-layer model for joint detection, 
we assumed the rate achievable via joint detection matched the
Shannon rate with a 3dB loss in both links and maximum spectral 
efficiency of 5 bps/Hz.

The curves show that the load-spillage algorithm is able
to deliver a large rate to a substantial fraction of the femto links.
For example, in the DL, the femto links
achieve a median spectral efficiency of approximately
2.6 bps/Hz without IC. 
Moreover, the loss in macro rate due to femto interference
is small.  Although
the target interference level was set to a maximum 10\% reduction in rate
in the macro, many macro UEs experience an interference well below
the target.  Although not shown in the figure, the actual decrease in
the macro rate observed in the simulation was 4.9\%.

Adding IC via joint detection can improve the rate further,
especially for low rate mobiles.  However, some caveats are in order:
Most importantly, the gains with IC are modest.  Also, our simulation assumed
joint detection.  In simulations of successive IC (SIC) (results are not shown),
SIC alone produced only minimal gains.  This is to be expected since
SIC usually only has benefits for very strong interference.  Since joint detection
is computationally much harder than SIC, the practical value of IC
may be limited.  

Moving to the UL results in Fig.\ \ref{fig:specShareSim}, 
we see that the femto links were able to achieve
median spectral efficiencies of approximately 1.4 bps/Hz without IC
and 1.5 bps/Hz with IC.  The loss in the macro rate was somewhat
higher at 6.2\%, but still small.  The gains with IC in this case
are minimal.  This may be result of the fact that we assumed a rate fair
scheduler per cell in the uplink so no macro links come at very low rate.
Other macro scheduling policies may show further improvements for IC, but
this needs further study.

We conclude that subband scheduling in the macro
combined with the load-spillage power control in the femto can enable a
large rate for many femto links with minimal effect in the macro.
In addition, modest gains are possible in the DL with IC,
especially at low rates.  

Fig.\ \ref{fig:specShareSim} also shows a comparison
of the proposed method to simple orthogonalization.
In orthogonalization, the femtos and macros operate on different
frequencies.  Now, the proposed method resulted in a 4.9\% drop
in the average macro DL spectral efficiency
and an 6.2\% drop in the macro UL.
To compare the proposed method to orthogonalization,
we can subtract that fraction of the bandwidth from the macro
and run the femtos with no macro interference in that bandwidth.
The resulting CDF of the rates are shown in Fig.\ \ref{fig:specShareSim}
on the curve labeled ``orthog".
With orthogonalization from the macro,
many of the femto links get the maximum rate of
5 bps/Hz times the bandwidth fraction.  But, that rate
is much smaller for the overwhelming majority of links.

Fig.\ \ref{fig:specShareSim} also shows the rate distribution
with reuse 1 and power control without subband partitioning.
This can be simulated with the above framework, but with $K=1$
subband.  In the UL, we see that the proposed method offers
significantly greater rate.  In the DL, the median rate is approximately
the same, although the proposed method is more fair in that the
rate variation is less.

\section*{Conclusions}
We have shown that subband partitioning of the macrocell,
in addition to being beneficial to the rate distribution in the macrocell network,
may offer significant opportunities for simultaneous
short-range communication in the same band as the macro.
The key idea to realize this potential rate is
 for the short-range transmitters to
estimate the path losses to the macro receivers and to then
perform some optimal power allocation across the subbands
based on the path loss estimates.
In the simulation scenario considered here,
the proposed method offers significantly greater rates
than simple orthogonalization between the femto and macros.
The proposed method also appears to outperform reuse 1 without subband
 partitioning in the UL.  In the DL, the proposed method has a similar
 median rate, but shows less rate variations.
Further gains are also possible with interference cancelation
in the DL, but requires joint detection and not SIC alone.
The gains in the UL for IC (even with joint detection) are small.

A benefit of the proposed strategy is that
the short-range communication can be achieved
entirely ``under the radar" of the macrocell,
in that there is no need for any explicit communication between
the macros and short-range links.  The only requirements are that
the short-range
transmitters must estimate the path losses to the macro receivers,
the macro receivers must estimate the
aggregate interference rise from the short-range links and broadcast
certain load factors.  The fact that the short-range communication
can share the band with minimal coordination may also be useful in other
scenarios such as cognitive radios where the primary user
needs operate without coordination with potential secondary users.

Further simulations are however needed.  Most importantly,
the current simulations have assumed perfect knowledge
of the path losses and full buffer traffic without any dynamics.
We have also not explicitly modeled the overhead for the signals to
estimate the path losses.

\appendices

\section{Proof of Theorem \ref{thm:powCtrl}}
\label{sec:proof}

To prove (a), fix the load factors $s_{ik} > 0$
and power constraint bounds $\lambda_j > 0$ and
suppose that the powers $\pbf_j$ satisfy \eqref{eq:Uopt}.
To show that the set of $\pbf_j$'s is Pareto optimal,
suppose that there is an alternative set of power vectors $\pbf_j^1$
with $U(\pbf^1_j) \geq U(\pbf_j)$ for all $j$
with strict inequality for at least one $j$.
Let $q^1_{ik}$ be the corresponding interference
levels as in \eqref{eq:qik}.  We must show that $q_{ik}^1 >
q_{ik}$ for some $i$ and $k$.

First observe that since $U_j(\pbf_j)$ is strictly
increasing in each $p_{jk}$, $U_j(\pbf^1_j) \geq U(\pbf_j)$
and $\pbf_j$ is the maxima of \eqref{eq:Uopt} subject to
\eqref{eq:powCons}, we must have that for all $j$,
\[
    \sum_{k=1}^K p_{jk}^1r_{jk} \geq \lambda_j + \epsilon_j, \ \ \
    \sum_{k=1}^K p_{jk}r_{jk} \leq \lambda_j
\]
where $\epsilon_j \geq 0$ for all $j$
with $\epsilon_j > 0$ for at least one $j$.
Therefore,
\beqan
    \lefteqn{ \sum_{k=1}^K p_{jk}^1r_{jk} \geq \sum_{k=1}^K p_{jk}r_{jk}
    + \epsilon_j }\\
    &\stackrel{(a)}{\Leftrightarrow}&
        \sum_{k=1}^K \sum_{i=1}^{M(k)} (p_{jk}^1-p_{jk})s_{ik}G_{ijk} \geq \epsilon_j \\
    &\stackrel{(b)}{\Rightarrow}&
        \sum_{j=1}^N \sum_{k=1}^K \sum_{i=1}^{M(k)} (p_{jk}^1-p_{jk})s_{ik}G_{ijk} > 0 \\
    &\stackrel{(c)}{\Rightarrow}&
        \sum_{k=1}^K \sum_{i=1}^{M(k)} s_{ik}(q_{ik}^1-q_{ik}) > 0
\eeqan
where (a) follows from \eqref{eq:spillage};
(b) follows from taking the sum over $j$ and using the fact that
$\epsilon_j \geq 0$ with strict inequality for at least one $j$;
and
(c) follows from the definition of $q_{ik}$ in \eqref{eq:qik}.
Since $s_{ik} > 0$, we must have that
$q_{ik}^1 > q_{ik}$ for at least one $(i,k)$.
This proves part (a) of Theorem \ref{thm:powCtrl}.

To prove the converse part (b), we need the following standard
result in linear algebra.

\begin{lemma} \label{lem:minMax}  Suppose $\Abf \in \R^{m \x n}$ is
a matrix such that for any $\vbf \in \R^n$,
there exists some component $i=1,\ldots,m$
such that $(\Abf\vbf)_i \leq 0$.  Then, there exist a
$\wbf \in \R^m$  with $\wbf^T\Abf = 0$, $w_i \geq 0$ for all $i$
and $\sum_i w_i = 1$.
\end{lemma}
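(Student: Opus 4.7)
The plan is to recognize this lemma as a theorem-of-the-alternative of Gordan type and reduce it to the standard separating hyperplane theorem. Specifically, the hypothesis that every $\vbf \in \R^n$ produces at least one nonpositive coordinate $(\Abf\vbf)_i$ is precisely the statement that the system $\Abf\vbf > \mathbf{0}$ (strict in all components) has no solution, i.e.\ the subspace $S := \{\Abf\vbf : \vbf \in \R^n\}$ does not meet the open positive orthant $P := \{\ybf \in \R^m : y_i > 0, \ i=1,\ldots,m\}$.

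Next, I would apply the separating hyperplane theorem to the disjoint nonempty convex sets $S$ and $P$. This yields a nonzero $\wbf \in \R^m$ and a scalar $\alpha$ with $\wbf^T\ybf \le \alpha$ for all $\ybf \in S$ and $\wbf^T\ybf \ge \alpha$ for all $\ybf \in P$. Since $S$ is a linear subspace containing the origin, and since $\mathbf{0}$ lies in the closure of $P$, both inequalities at zero force $\alpha = 0$. The subspace property then upgrades $\wbf^T\ybf \le 0$ on $S$ to equality $\wbf^T\Abf\vbf = 0$ for every $\vbf$, hence $\wbf^T\Abf = 0$. The inequality $\wbf^T\ybf \ge 0$ on $P$, extended by continuity to the closed orthant $\R^m_+$, applied to the standard basis vectors $\ebf_i$ gives $w_i \ge 0$ for every $i$. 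Finally, since $\wbf \neq 0$ and all $w_i \ge 0$, we have $\sum_i w_i > 0$, so rescaling $\wbf$ by $1/\sum_i w_i$ produces the required vector with $\sum_i w_i = 1$, while preserving both $\wbf^T\Abf = 0$ and nonnegativity.

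The only delicate point I expect is arranging the separation so as to conclude $\wbf^T\Abf = 0$ rather than merely $\wbf^T\Abf\vbf \le 0$, and to be sure that the separating functional is nontrivial on the orthant (ruling out $\wbf = 0$). Both follow cleanly from the fact that $S$ is a subspace and $P$ is an open cone with nonempty interior, so no further care beyond the routine separating-hyperplane argument is needed. Because this is a classical result, I would present the argument compactly in the appendix rather than reprove the separation theorem itself.
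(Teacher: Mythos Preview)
Your argument is correct and is the standard proof of Gordan's alternative via the separating-hyperplane theorem; every step you outline goes through exactly as you describe.

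The paper, however, takes a different route. Rather than separating the range of $\Abf$ from the open positive orthant, it works directly over the probability simplex $\Wbf = \{\wbf \in \R^m : w_i \ge 0,\ \sum_i w_i = 1\}$ and observes that the hypothesis is equivalent to
\[
\max_{\vbf \in \R^n}\ \min_{\wbf \in \Wbf}\ \wbf^T\Abf\vbf = 0.
\]
It then invokes the minimax theorem (bilinear objective, $\Wbf$ compact convex) to interchange the $\max$ and $\min$, obtaining a $\wbf \in \Wbf$ for which $\max_{\vbf} \wbf^T\Abf\vbf = 0$, which forces $\wbf^T\Abf = 0$ since the inner maximum over all of $\R^n$ would otherwise be $+\infty$. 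Your approach is more elementary in that it relies only on convex separation rather than a saddle-point/minimax result, and it has the virtue of naming the lemma as a classical theorem of the alternative. The paper's version is shorter once the minimax interchange is granted, and it delivers the normalization $\sum_i w_i = 1$ for free by working on the simplex from the outset rather than rescaling at the end. Either proof would be perfectly appropriate for the appendix.
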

\begin{proof}  Let $\Wbf$ be the simplex,
\[
    \Wbf := \left\{ \wbf \in \R^m~:~ w_i \geq 0, ~ \sum_i w_i=1 ~ \right\}.
\]
The hypothesis of the lemma implies that for every $\vbf \in \R^n$,
there exists a $\wbf \in \Wbf$ such that $\wbf^T\Abf\vbf \leq 0$.
Therefore,
\[
     \max_{\vbf \in \R^n} \min_{\wbf \in \Wbf} \wbf^T\Abf\vbf = 0.
\]
Since this optimization is convex in $\wbf$ and concave in $\vbf$,
there is a Nash equilibrium \cite{BoydEFB:94}, so we can interchange
the min and max to obtain
\[
     \min_{\wbf \in \Wbf} \max_{\vbf \in \R^n} \wbf^T\Abf\vbf = 0.
\]
But this can occur only if $\wbf^T\Abf = 0$.
\end{proof}

We can use this result to prove the the converse in part (b)
as follows.
Suppose $\pbf_j$, $j=1,\ldots,N$
is a set of Pareto optimal powers.
Since they are Pareto optimal, any small change in the powers
must result in an increase in one of the interference levels $q_{ik}$
or decrease in the utilities $U_j$.
That is, for any non-zero set of $\Delta p_{jk}$'s,
there must either be some $i$ and $k$ such that
\beq \label{eq:deltaq}
    \Delta q_{ik} := \sum_j G_{ijk} \Delta p_{jk} \geq 0,
\eeq
or $j$ such that
\beq \label{eq:deltaU}
    \Delta U_j := \frac{\partial U_j(\pbf_j)}{\partial p_{jk}}
        \Delta p_{jk} \leq 0.
\eeq
Define the matrix $\Abf$ by
\[
    \Abf = \left[ \begin{array}{c}
        -DU(\pbf) \\
        \Gbf
        \end{array}
        \right],
\]
where the columns are indexed by the pairs $(j,k)$,
$DU(\pbf)$ is the matrix with the components
$DU(\pbf)_{j,jk} = \partial U_j(\pbf_j) / \partial p_{jk}$
and $\Gbf$ is the matrix of gains $G_{i,jk}$.
Now, \eqref{eq:deltaq} and \eqref{eq:deltaU} shows that for every
$\Delta \pbf$,
$(\Abf \Delta \pbf)_\ell \leq 0$ for some index $\ell$.
Hence, by Lemma \ref{lem:minMax}, there must exists a $\wbf \geq 0$
with unit norm such that $\wbf^T \Abf = 0$.  Partitioning
\[
    \wbf = [\tbf^T \ \sbf^T]^T
\]
conformably with $\Abf$ we see that there exists
a set of nonnegative $t_j$ and $s_{ik}$ such that
\beq \label{eq:derivUts}
    t_j\frac{\partial U_j(\pbf_j)}{\partial p_{jk}} -
    \sum_{i=1}^{M(k)} s_{ik}G_{ijk} = 0.
\eeq
Now $t_j \geq 0$ for all $j$ and $s_{ik} \geq 0$ for all $i$ and $k$,
with at least one of the parameters being strictly greatly than zero.
Since $G_{ijk} > 0$ and $\partial U_j/\partial p_{jk} > 0$ for all $i$,
$j$ and $k$,
it can be verified that \eqref{eq:derivUts} implies that
that $t_j > 0$ for all $j$.

Now, using \eqref{eq:spillage}, \eqref{eq:derivUts} can be rewritten as
\beq \label{eq:derivL}
    \frac{\partial L_j(\pbf_j)}{\partial p_{jk}} = 0,
\eeq
where $L_j$ is the Lagrangian
\[
    L_j(\pbf_j) := t_jU_j(\pbf_j) - \sum_{k=1}^K r_{jk}p_{jk}.
\]
Let $\lambda_j = \sum_k r_{jk}p_{jk}$.
Since $t_j > 0$, \eqref{eq:derivL} shows that $\pbf_j$ is a critical point
of the optimization \eqref{eq:Uopt} subject to \eqref{eq:powCons}.
Since $U_j(\pbf_j)$ is concave, any critical point is the maxima.
Thus, we have found $s_{ik}$ and $\lambda_j$ such that
the set of power vectors $\pbf_j$ satisfy \eqref{eq:Uopt}.
\bibliographystyle{IEEEtran}
\bibliography{bibl}

\newcommand{\SortNoop}[1]{}
\begin{thebibliography}{10}
\providecommand{\url}[1]{#1}
\csname url@samestyle\endcsname
\providecommand{\newblock}{\relax}
\providecommand{\bibinfo}[2]{#2}
\providecommand{\BIBentrySTDinterwordspacing}{\spaceskip=0pt\relax}
\providecommand{\BIBentryALTinterwordstretchfactor}{4}
\providecommand{\BIBentryALTinterwordspacing}{\spaceskip=\fontdimen2\font plus
\BIBentryALTinterwordstretchfactor\fontdimen3\font minus
  \fontdimen4\font\relax}
\providecommand{\BIBforeignlanguage}[2]{{%
\expandafter\ifx\csname l@#1\endcsname\relax
\typeout{** WARNING: IEEEtran.bst: No hyphenation pattern has been}%
\typeout{** loaded for the language `#1'. Using the pattern for}%
\typeout{** the default language instead.}%
\else
\language=\csname l@#1\endcsname
\fi
#2}}
\providecommand{\BIBdecl}{\relax}
\BIBdecl

\bibitem{3GPP25467}
3GPP, ``{UTRAN architecture for 3G Home Node B (HNB); Stage 2},'' TS 25.467
  (release 9), 2010.

\bibitem{3GPP22220}
------, ``{Service requirements for Home NodeBs (UMTS) and eNodeBs (LTE)},'' TS
  22.220 (release 9), 2010.

\bibitem{3GPP25820}
------, ``{3G Home NodeB Study Item Technical Report},'' TR 25.820 (release 9),
  2010.

\bibitem{ChaAndG:08}
V.~Chandrasekhar, J.~G. Andrews, and A.~Gatherer, ``Femtocell networks: A
  survey,'' \emph{IEEE Comm. Mag.}, vol.~46, no.~9, pp. 59--67, Sep. 2009.

\bibitem{YehTLK:08}
S.-P. Yeh, S.~Talwar, S.-C. Lee, and H.~Kim, ``{WiMAX} femtocells: A
  perspective on network architecture, capacity, and coverage,'' \emph{IEEE
  Comm. Mag.}, vol.~46, no.~10, pp. 58--65, Oct. 2008.

\bibitem{LopezVRZ:09}
D.~L{\'o}pez-P{\'e}rez, A.~Valcarce, G.~de~la Roche, and J.~Zhang, ``{OFDMA}
  femtocells: A roadmap on interference avoidance,'' \emph{IEEE Comm. Mag.},
  vol.~47, no.~9, pp. 41--48, Sep. 2009.

\bibitem{Claussen:08}
H.~Claussen, ``Co-channel operation of macro- and femtocells in a hierarchical
  cell structure,'' \emph{Int. J. Wireless Information Networks}, vol.~15, no.
  3--4, pp. 137--147, Dec. 2008.

\bibitem{3GPPR1050507:05}
Huawei, ``{Soft Frequency Reuse Scheme for UTRAN LTE},'' 3GPP R1-050507, May
  2005.

\bibitem{LeiZZY:07}
H.~Lei, L.~Zhang, X.~Zhang, and D.~Yang, ``{A novel multi-cell OFDMA system
  structure using fractional frequency reuse},'' in \emph{Proc. Personal Indoor
  Mobile Radio Comm.}, Athens, Greece, Sep. 2007.

\bibitem{HanPLAJ:08}
S.~Han, J.~Park, T.-J. Lee, H.G.Ahn, and K.~Jang, ``A new frequency
  partitioning and allocation of subcarriers for fractional frequency reuse in
  mobile communication systems,'' \emph{IEICE Trans. Comm.}, no.~8, p.
  2748–2751.

\bibitem{StoVis:08}
A.~Stolyar and H.~Viswanathan, ``Self-organizing dynamic fractional frequency
  reuse in {OFDMA} systems,'' in \emph{Proc. IEEE INFOCOM}, Phoenix, AZ, Apr.
  2009, pp. 691--698.

\bibitem{OhCCL:10}
C.~Y. Oh, M.~Y. Chung, H.~Choo, and T.-J. Lee, ``{A Novel Frequency Planning
  for Femtocells in OFDMA-Based Cellular Networks Using Fractional Frequency
  Reuse},'' in \emph{Computational Science and Its Applications – ICCSA 2010},
  Fukuoka, Japan, Mar. 2010, pp. 96--106.

\bibitem{HandeRCW:08}
P.~Hande, S.~Rangan, M.~Chiang, and X.~Wu, ``Distributed uplink power control
  for optimal {SIR} assignment in cellular data networks,'' \emph{IEEE/ACM
  Trans. Networking}, vol.~16, no.~6, pp. 1420--1433, Dec. 2008.

\bibitem{ZhangRoche:10}
J.~Zhang and G.~de~la Roche, \emph{{Femtocells: Technologies and
  Deployment}}.\hskip 1em plus 0.5em minus 0.4em\relax John Wiley \& Sons,
  2010.

\bibitem{ShiReed:07}
Z.~Shi and M.~C. Reed, ``{Iterative maximal ratio combining channel estimation
  for multiuser detection on a time frequency selective wireless CDMA
  channel},'' in \emph{Proc.\ IEEE Wireless Communications and Networking
  Conference}, Hong Kong, Mar. 2007.

\bibitem{ChandAndMSG:09}
V.~Chandrasekhar, J.~G. Andrews, T.~Muharemovic, Z.~Shen, and A.~Gatherer,
  ``Power control in two-tier femtocell networks,'' \emph{IEEE Trans. Wireless
  Comm.}, vol.~8, no.~8, pp. 4316--4328, Aug. 2009.

\bibitem{JoMMY:09}
H.-S. Jo, C.~Mun, J.~Moon, and J.-G. Yook, ``Interference mitigation using
  uplink power control for two-tier femtocell networks,'' \emph{IEEE Trans.
  Wireless Comm.}, vol.~8, no.~10, pp. 4906 -- 4910, Oct. 2009.

\bibitem{FemtoForum:10}
FemtoForum, ``{Interference Management in OFDMA Femtocells},'' Whitepaper
  available at www.femtoforum.org, Mar. 2010.

\bibitem{KishoreGPS:01}
S.~Kishore, L.~J. Greenstein, H.~V. Poor, and S.~C. Schwartz, ``{Capacity in a
  CDMA Macrocell with a Hotspot Microcell: Exact and Approximate Analyses},''
  in \emph{Proc. IEEE Veh. Tech. Conf.}, Atlantic City, NJ, Oct. 2001, pp.
  1172--1176.

\bibitem{KishoreGPS:05}
------, ``Soft handoff and uplink capacity in a two-tier {CDMA} system,''
  \emph{IEEE Trans. Wireless Comm.}, vol.~4, no.~4, pp. 1297--1301, Jul. 2005.

\bibitem{ChaAnd:09a}
V.~Chandrasekhar and J.~G. Andrews, ``Uplink capacity and interference
  avoidance for two-tier femtocell networks,'' \emph{IEEE Trans. Wireless
  Comm.}, vol.~8, no.~7, pp. 3498--3509, July 2009.

\bibitem{ChaAnd:09b}
------, ``Spectrum allocation in tiered cellular networks,'' \emph{IEEE Trans.
  Comm.}, vol.~57, no.~10, pp. 3059--3068, Oct. 2009.

\bibitem{Dahlman:07}
E.~Dahlman, S.~Parkvall, J.~Sk{\"o}ld, and P.~Beming, \emph{{3G} Evolution:
  {HSPA} and {LTE} for Mobile Broadband}.\hskip 1em plus 0.5em minus
  0.4em\relax Oxford, UK: Academic Press, 2007.

\bibitem{Ahlswede:71}
R.~Ahlswede, ``Multi-way communication channels,'' in \emph{Proc. IEEE Int.
  Symp. Inform. Th.}, Armenian S.S.R., Sep. 1971, pp. 23--52.

\bibitem{CoverT:91}
T.~M. Cover and J.~A. Thomas, \emph{Elements of Information Theory}.\hskip 1em
  plus 0.5em minus 0.4em\relax New York: John Wiley \& Sons, 1991.

\bibitem{Andrews:05}
J.~G. Andrews, ``Interference cancellation for cellular systems: A contemporary
  overview,'' \emph{IEEE Wireless Comm.}, vol.~12, no.~2, pp. 19--29, Apr.
  2005.

\bibitem{BourdreauPGCWV:09}
G.~Boudreau, J.~Panicker, N.~Guo, R.~Chang, N.~Wang, and S.~Vrzic,
  ``{Interference Coordination and Cancellation for 4G Networks},'' \emph{IEEE
  Comm. Mag.}, vol.~47, no.~4, pp. 74--81, Apr. 2009.

\bibitem{YeuNan:96}
K.~Yeung and S.~Nanda, ``Channel management in microcell/macrocell cellular
  radio systems,'' \emph{IEEE Trans. Vehicular Tech.}, vol.~45, no.~4, pp.
  601--612, Nov. 1996.

\bibitem{JabFuh:97}
B.~Jabbari and W.~F. Fuhrmann, ``Teletraffic modeling and analysis of flexible
  hierarchical cellular networks with speed-sensitive handoff strategy,''
  \emph{IEEE J. Sel. Areas Comm.}, vol.~15, no.~8, pp. 1539 -- 1548, Oct. 1997.

\bibitem{KleinHan:04}
T.~E. Klein and S.-J. Han, ``Assignment strategies for mobile data users in
  hierarchical overlay networks: performance of optimal and adaptive
  strategies,'' \emph{IEEE J. Sel. Areas Comm.}, vol.~22, no.~5, pp. 849--861,
  Jun. 2004.

\bibitem{ChiangHLT:08}
M.~Chiang, P.~Hande, T.~Lan, and C.~W. Tan, ``{Power Control in Wireless
  Cellular Networks},'' \emph{Foundation and Trends in Networking}, vol.~2,
  no.~4, pp. 381--533, Jul. 2008.

\bibitem{YuGC:02}
W.~Yu, G.~Ginis, and J.~Cioffi, ``{Distributed multiuser power control for
  digital subscriber lines},'' \emph{IEEE J. Sel. Areas Comm.}, vol.~20, no.~5,
  p. 1105–1115, Jun. 2002.

\bibitem{CendrillonHCM:07}
R.~Cendrillon, J.~Huang, M.~Chiang, and M.~Moonen, ``{Autonomous Spectrum
  Balancing for Digital Subscriber Lines},'' \emph{IEEE Trans. Signal
  Process.}, vol.~55, no.~8, pp. 4241--4257, Aug. 2007.

\bibitem{BoydEFB:94}
S.~P. Boyd, L.~{El Ghaoui}, E.~Feron, and V.~Balakrishnan, \emph{Linear Matrix
  Inequalities in System and Control Theory}.\hskip 1em plus 0.5em minus
  0.4em\relax Philadelphia, PA: SIAM, 1994.

\end{thebibliography}

\end{document}